\DeclareRobustCommand\onedot{\futurelet\@let@token\@onedot}
\def\@onedot{\ifx\@let@token.\else.\null\fi\xspace}
\pgfplotsset{compat=1.16}
\newcommand{\rfig}[1]{\autoref{fig:#1}}
\newcommand{\ralg}[1]{\autoref{alg:#1}}
\newcommand{\rthm}[1]{\autoref{thm:#1}}
\algnewcommand\Null{\textsc{null }}
\algnewcommand\algorithmicinput{\textbf{Input:}}
\algnewcommand\Input{\item[\algorithmicinput]}
\algnewcommand\algorithmicoutput{\textbf{Output:}}
\algnewcommand\Output{\item[\algorithmicoutput]}
\algnewcommand\algorithmicbreak{\textbf{break}}
\algnewcommand\Break{\algorithmicbreak}
\algnewcommand\algorithmiccontinue{\textbf{continue}}
\algnewcommand\Continue{\algorithmiccontinue}
\algnewcommand{\LeftCom}[1]{\State $\triangleright$ #1}
\newtheorem{thm}{Theorem}[section]
\newtheorem{lem}{Definition}[section]
\colorlet{shadecolor}{black!15}
\theoremstyle{definition}
\def\thmautorefname~#1\null{Theorem~#1~\null}
\def\lemautorefname~#1\null{Define~#1~\null}
\def\algorithmautorefname~#1\null{Algorithm~#1~\null}
\newenvironment{breakablealgorithm}
  {
   \begin{center}
     \refstepcounter{algorithm}
     \hrule height.8pt depth0pt \kern2pt
     \renewcommand{\caption}[2][\relax]{
       {\raggedright\textbf{\ALG@name~\thealgorithm} ##2\par}%
       \ifx\relax##1\relax 
         \addcontentsline{loa}{algorithm}{\protect\numberline{\thealgorithm}##2}%
       \else 
         \addcontentsline{loa}{algorithm}{\protect\numberline{\thealgorithm}##1}%
       \fi
       \kern2pt\hrule\kern2pt
     }
  }{
     \kern2pt\hrule\relax
   \end{center}
  }
\journal{Journal of \LaTeX\ Templates}
\begin{document}
\sloppy{}

\begin{frontmatter}
 
\title{An Efficient Algorithm for the Partitioning Min-Max Weighted Matching Problem}

\author{Yuxuan Wang, Jinyao Xie, Jiongzhi Zheng, Kun He$^{*}$}
\address{School of Computer Science and Technology, Huazhong University of Science and Technology, Wuhan 430074, China}
\fntext[myfootnote]{The first two authors contribute equally. Corresponding author, Kun He, Email: brooklet60@hust.edu.cn.}



\begin{abstract}

The Partitioning Min-Max Weighted Matching (PMMWM) problem is an NP-hard problem that combines the problem of partitioning a group of vertices of a bipartite graph into disjoint subsets with limited size and the classical Min-Max Weighted Matching (MMWM) problem. Kress et al. proposed this problem in 2015 and they also provided several algorithms, among which MP$_{\text{LS}}$ is the state-of-the-art. In this work, we observe there is a time bottleneck in the matching phase of MP$_{\text{LS}}$. Hence, we optimize the redundant operations during the matching iterations, and propose an efficient algorithm called the MP$_{\text{KM-M}}$ that greatly speeds up MP$_{\text{LS}}$. The bottleneck time complexity is optimized from $O(n^3)$ to $O(n^2)$. We also prove the correctness of MP$_{\text{KM-M}}$ by the primal-dual method. To test the performance on diverse instances, we generate various types and sizes of benchmarks, and carried out an extensive computational study on the performance of MP$_{\text{KM-M}}$ and MP$_{\text{LS}}$. The evaluation results show that our MP$_{\text{KM-M}}$ greatly shortens the runtime as compared with MP$_{\text{LS}}$ while yielding the same solution quality.
	
\end{abstract}

\begin{keyword}
   Combinatorial optimization, partitioning, maximum matching, bipartite graph, KM algorithm
\end{keyword}

\end{frontmatter}



\section{Introduction}
In this paper, we consider the Partitioning Min-Max Weighted Matching (PMMWM) problem proposed by \cite{a1}. The PMMWM is defined by a weighted bipartite graph $G(U,V,E)$ with disjoint vertex sets $U$, $V$ and edge set $E = \{e_{uv}|u \in U, v \in V\}$, where $U$ needs to be partitioned into $m$ disjoint partitions with no more than $\bar{u}$ vertices each. Given a maximum matching on $G$, the weight of a partition is defined as the sum of the edge weights for the edges matching the vertices in the partition. The goal of the problem is to find a matching and a partition to minimize the largest weight of the partitions. The PMMWM is an NP-hard problem in the strong sense.

\begin{figure}[H]
    \centering
    \begin{subfigure}[b]{0.48\textwidth}
        \centering
		\setlength{\abovecaptionskip}{0.cm}
		\includegraphics[scale=0.2,trim={0 0 0 0}, clip]{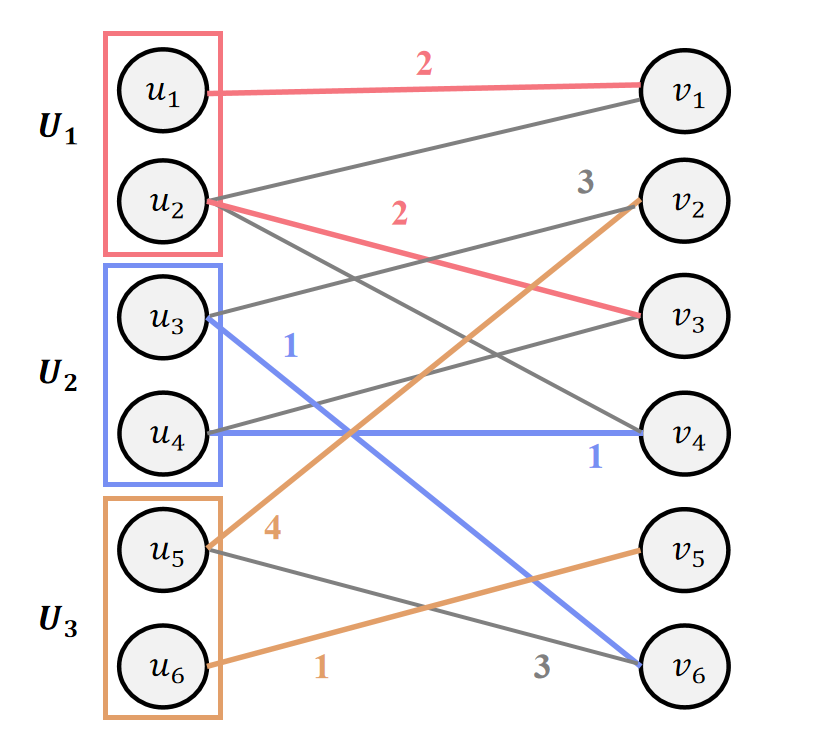}
		\caption{A feasible solution}
		\label{fig:pmmwm1}
    \end{subfigure}
    \hfill
   \begin{subfigure}[b]{0.48\textwidth}
        \centering
		\setlength{\abovecaptionskip}{0.cm}
		\includegraphics[scale=0.2,trim={0 0 0 0}, clip]{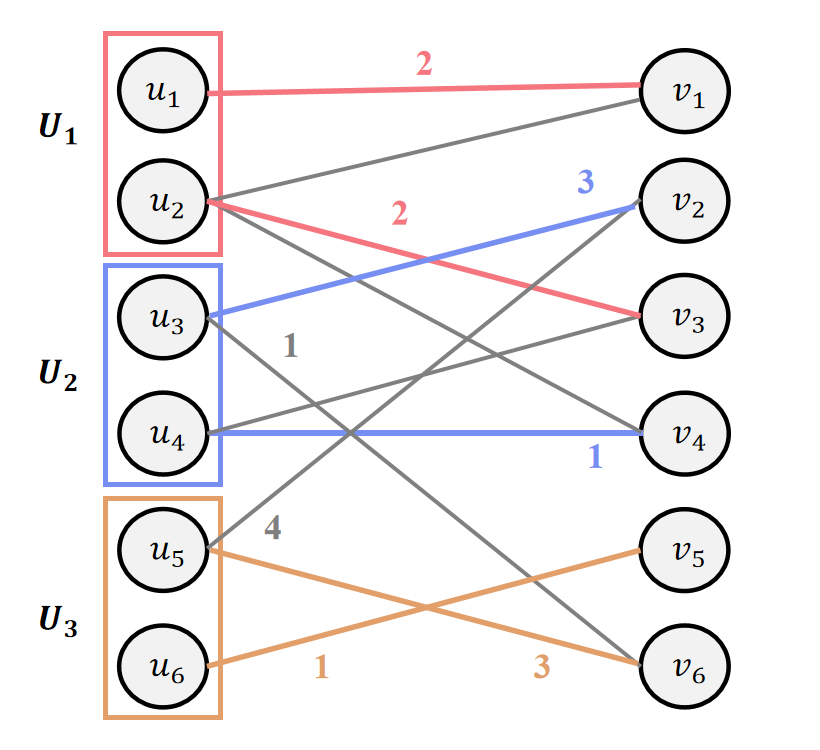}
		\caption{An improved solution}
		\label{fig:pmmwm2}
    \end{subfigure}
\caption{A PMMWM example.}
\label{fig:pmmwm}   
\end{figure}

\rfig{pmmwm1} illustrates a feasible solution to a PMMWM instance. The colored edges indicate the maximum matching edges and the associated weights are marked beside. The left part of the bipartite graph, $U=\{u_1,u_2,...,u_6\}$, is divided into three partitions, $U_1$, $U_2$ and $U_3$ labeled with different colors. The weights of the three partitions are 4, 2 and 5. The value of the objective function is 5. 

We consider two different ways to optimize the objective function.
The first is to modify the partition strategy. We can move $u_6$ to $U_2$, then the weights of the three partitions become 4, 3, and 4. The objective function value is 4. 
The second is to modify the matching strategy. For $u_3,u_5$, if we replace the current matching edges $e_{u_3 v_6}, e_{u_5 v_2}$ with edges $e_{u_3 v_2},e_{u_5 v_6}$, then the weights of the partitions become 4, 4, 4, as shown in \rfig{pmmwm2}. Although the total weight of the matching edges is increased, the final objective function is reduced to 4.

\cite{a1} show an application of the PMMWM at small to medium sized seaports. The seaports include long-term storage areas and temporary storage areas. The vertex set $U$ denotes the empty slots in the long-term storage area, the vertex set $V$ denotes the containers in the temporary storage area, and the edge weight $c_{uv}$ for an edge $e_{uv} \in E$ denotes the cost required to move the container $v$ to the empty slot $u$. The parameter $m$ indicates the number of available telescopic stackers and $\bar{u}$ indicates the upper limit of the number of containers that can be handled by one telescopic stacker. Suppose the cost of a telescopic stacker equals to the total cost of moving its containers, the goal of the problem is to minimize the maximum cost among the telescopic stackers.

We find another application of the PMMWM that is related to the task allocation. Suppose there are $n$ workers, $n$ tasks, and $m$ machines in a workshop. Each task should be performed on a machine by a worker, each worker performs only one task, and each machine can take at most one task at a time. The vertex set $U$ denotes the workers, $V$ denotes the tasks, and the edge weight $c_{uv}$ denotes the spending time of performing task $v$ by worker $u$. The parameter $\bar{u}$ indicates the maximum number of tasks that can be performed by each machine per day (working period). The goal of the problem is to determine the matching between workers and tasks, and the partition of workers to machines, so as to minimize the longest spending time among the machines. 

In this work, we analyze the MP$_{\text{LS}}$ algorithm proposed by \cite{a1}
and find that the time bottleneck is in the matching phase. MP$_{\text{LS}}$ is an iterative algorithm, and at each iteration it adjusts the edge weight matrix of the bipartite graph to generate a different initial maximum matching by an exact algorithm. Since the modified bipartite graph does not change much at each iteration, there are many redundant operations that can be optimized. 
Hence, we propose a new algorithm called MP$_{\text{KM-M}}$, which reuses the matching results of the edges that their weights are not changed in the last iteration, and hence reduces the time complexity from $O(n^3)$ to $O(n^2)$ of the matching phase while maintaining the resulting quality.

The remainder of the paper is organized as follows. We describe the related works in Section 2, and formulate the PMMWM problem in Section 3. Section 4 briefly introduces the MP$_{\text{LS}}$ algorithm, and Section 5 details the proposed MP$_{\text{KM-M}}$ algorithm and proves the correctness. Section 6 reports the results of the experiment, and Section 7  summarizes the main conclusion of this paper.

\section{Related Works}

The PMMWM is an extension problem of the Min-Max Weighted Matching (MMWM) problem~\citep{a2}. MMWM is motivated by the container transshipment in a rail-road terminal. MMWM has the same objective function as PMMWM, but the partitioning of the vertices is given. So, MMWM only needs to decide the matching result. \cite{a2} prove that the MMWM is NP-hard in the strong sense and develop heuristics to find the matching that the average gap to the optimal solution is less than 1\%. 

When there is only one partition (i.e., $m = 1$), the PMMWM can be reduced to an assignment problem that just needs to find a maximum matching with minimum weight. For the assignment problem, \cite{a3} propose an exact polynomial-time algorithm called the Hungarian method which is also denoted as the KM algorithm. The KM algorithm is applied on the matching process of both MP$_{\text{LS}}$~\citep{a1} and MP$_{\text{KM-M}}$. Moreover, the KM algorithm inspired \cite{primal-dual} to propose the primal-dual method for linear programming, which is widely used for various combinatorial optimization problems. In this paper, we also use the primal-dual method to prove the correctness of the proposed MP$_{\text{KM-M}}$ algorithm. 

However, in some practical applications, the partition also needs to be determined, for instance in the small to medium sized seaports where containers are handled by reach stackers. So, \cite{a1} propose the PMMWM and decompose the problem into two stages, matching and partitioning, and solve them either exactly or heuristically. They propose two different heuristic frameworks with different order of the matching and partitioning phases. For the partitioning-matching framework, they propose two heuristics called PM$_{\text{BPS}}$ and PM$_{\text{REG}}$. For the matching-partitioning framework, they propose other two heuristics called MP and MP$_{\text{LS}}$. By performing extensive experiments, \cite{a1} conclude that the MP$_{\text{LS}}$ algorithm is the best-performing one among the four algorithms, and is also robust against changes in the graph structure.

\cite{7376822} also apply PMMWM to the intermodal transport, and propose a heuristic based on tabu search~\citep{DBLP:journals/informs/Glover89,DBLP:journals/informs/Glover90} and a genetic algorithm~\citep{DBLP:books/daglib/0019083}. They propose three neighborhood structures for the tabu search heuristic and a crossover operator for the genetic algorithm. But these two algorithms are not as effective as the MP$_{\text{LS}}$ algorithm. Therefore, MP$_{\text{LS}}$ is still the best-performing algorithm for the PMMWM. 
In this work, we select MP$_{\text{LS}}$ as the baseline algorithm, and our proposed MP$_{\text{KM-M}}$ algorithm significantly outperforms MP$_{\text{LS}}$ on the algorithm efficiency while keeping the same effectiveness.

\section{Problem Formulation}
Let $G(U,V,E)$ be a weighted bipartite graph, where $U$ and $V$ are two disjoint vertex sets, and $E = \{e_{uv}|u \in U, v \in V\}$ denotes the set of pairwise edges. 
Assume $|U| = n_1$, $|V| = n_2$, and $n_1 \leq n_2$. Each edge $e_{uv} \in E$ is associated with a weight $c_{uv}\in\mathbb{Q}^{+}_{0}$. Define a matching as the set $M\subseteq E$ of pairwise nonadjacent edges, and the maximum matching~\citep{DBLP:journals/computing/DerigsZ78} as the matching with the largest $|M|$ among all matchings on $G$. 


Assume that for any given weighted bipartite graph, there exists at least one maximum matching $\Pi$ such that $|\Pi|=n_{1}$. Define a partition $\mathcal{P}$ of $U$ that divides $U$ into $m$ disjoint partitions $U_1,U_2,...,U_m$, with at most $\bar{u}$ vertices in each partition. The goal of the PMMWM is to find a partition $\mathcal{P}$ of $U$ and a maximum matching $\Pi$ of $G$ that minimizes the evaluation function $f(\mathcal{P}, \Pi) := \max_{k\in\{1, ...,m\}}\{\sum_{u\in U_k,(u,v)\in\Pi}c_{uv}\}$.



Let $z_{uvk}$ be a binary variable such that $z_{uvk} = 1$ if $u \in U_k, e_{uv} \in \Pi$, and $z_{uvk} = 0$ otherwise. 
The PMMWM problem can be formalized as follows. 

\begin{equation*}
\begin{aligned}
    \label{eq:pmmwm}
    & \min_{\mathbf{z}} \max_{k\in\{1,…,m\}} \{\sum_{u\in U}\sum_{v\in V}c_{uv}z_{uvk}\} \\
\text{s.t.} \qquad & (1) \qquad\sum_{k=1}^{m}\sum_{v\in V}z_{uvk}=1\quad\forall u\in U,\\
    &(2) \qquad \sum_{k=1}^{m}\sum_{u\in U}z_{uvk}\le1\quad\forall v \in V,\\
    &(3) \qquad \sum_{k=1}^{m}z_{uvk}=1\quad\forall e_{uv} \in E,\\
    &(4) \qquad \sum_{u \in U}\sum_{v\in V}z_{uvk}\le\bar{u}\quad\forall k\in\{1,…,m\}.
\end{aligned}
\end{equation*}

The objective function minimizes the maximum matching weight among all possible assignments and maximum matchings. Constraints (1) and (2) are the well-known constraints on the maximum matching. Constraint (3) forces each vertex $u\in U$ to belong to exactly one partition $U_k,k\in \{1,...,m\}$. Constraint (4) restricts that the size of the partition does not exceed $\bar{u}$. 

\section{The MP$_{\text{LS}}$ Algorithm}

The MP$_{\text{LS}}$ algorithm~\citep{a1} decomposes the PMMWM problem into two components, i.e., matching and partition. MP$_{\text{LS}}$ is an iterative heuristic algorithm, and each iteration consists of three stages. The first stage generates a maximum matching $\Pi$ of $G = (U,V,E)$, and the second stage calculates a partition $\mathcal{P}$ of $U$ that satisfies constraints (3) and (4) of the problem. Then the combination of $\mathcal{P}$ and $\Pi$ results in a feasible solution for an input instance. In the third stage, MP$_{\text{LS}}$ makes a slight adjustment on the weight matrix of $E$ and then returns to the first stage of the next iteration, which can help the algorithm to escape from the local optima and find better solutions. We denote $G'$ to be the bipartite graph where the edge weight matrix is adjusted. Starting from $G' = G$, the MP$_{\text{LS}}$ algorithm repeats the three stages until a termination condition is met. Implementation details of the three stages are as follows.

In the first stage, MP$_{\text{LS}}$ applies the KM algorithm~\citep{DBLP:books/daglib/p/Kuhn10} to obtain a minimum weight maximum matching $\Pi$ of the bipartite graph $G'$. 
So match $\Pi$ is a maximum matching such that the total weight of the edges belonging to the matching is minimized.

In the second stage, MP$_{\text{LS}}$ first assumes the maximum matching $\Pi$ of $G'$ is fixed, then the problem can be regarded as a restricted partitioning problem of $U$, by setting the weight of each vertex $u \in U$ to $w_u = c_{uv}, e_{uv} \in \Pi$ (note that when calculating the partition in this stage, $c_{uv}$ is always the original weight of edge $e_{uv}$). MP$_{\text{LS}}$ uses the RPH~\citep{a2} heuristic algorithm to solve the restricted partitioning problem and obtains a feasible solution to the PMMWM. 

In the third stage, MP$_{\text{LS}}$ changes the weight of one edge in $G'$ so as to adjust the bipartite matching solution in the next iteration. Let $k = \mathop{\arg\max_{k \in \{1,...,m\}}{\{\sum_{u \in U_k, e_{uv} \in \Pi}{c_{uv}}\}}}$, then the selected edge $e \in \Pi$ is an edge with the maximum weight among the current edges of $\Pi$ that are incident to vertices of component $U_k$. MP$_{\text{LS}}$ changes the weight of $e$ in $G'$ to a large value ($10^2 c_{\max}$ in default, where $c_{\max}$ is the maximum edge weight in the input graph) to largely reduce the probability that this edge will belong to the matching calculated in the first stage of the next iteration. 

The MP$_{\text{LS}}$ algorithm stops when the solution to the PMMWM has not improved for 20 iterations.





\section{The Proposed MP$_{\text{KM-M}}$ Algorithm}
In this section, we propose an MP$_{\text{KM-M}}$ algorithm that significantly improves the efficiency of the SOTA algorithm, MP$_{\text{LS}}$, for solving the PMMWM. 
We observe that in each iteration, MP$_{\text{LS}}$ only changes the weight of one edge in the third stage, but recalculates the minimum weight maximum matching of the whole graph in the first stage of the next iteration, which is redundant and very time-consuming. 
Specifically, the KM algorithm used in the first stage of MP$_{\text{LS}}$ will cost $O(n^3)$ time in each iteration. To handle this issue, the proposed MP$_{\text{KM-M}}$ optimizes the efficiency of the first stage by simplifying the matching process.

We first introduce the KM algorithm in detail, then presents the simplified matching algorithm KM-M in the proposed MP$_{\text{KM-M}}$ algorithm and the main process of MP$_{\text{KM-M}}$, and finally provides the proof on the correctness of the KM-M algorithm.


\subsection{The KM Algorithm}

The KM algorithm~\citep{DBLP:books/daglib/p/Kuhn10} is mainly used to solve the maximum weight maximum bipartite matching problem. By multiplying the weight of each edge in the input bipartite graph by $-1$, KM can be used to solve the minimum weight maximum bipartite matching problem in the first stage of MP$_{\text{LS}}$. Before introducing the procedure of the KM algorithm, we first introduce some definitions.

\begin{lem}
	Alternating Path. Given a bipartite graph $G = (U,V,E)$ with some edges matched. An alternating path on $G$ is a path that starts from an unmatched vertex, then traverses an unmatched edge and matched edge alternatively.
\end{lem}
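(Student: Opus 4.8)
The statement is a \emph{definition}, not a claim: the paper declares its \texttt{lem} environment to print as ``Definition'' (and its autoref name as ``Define''), so what is typeset here merely fixes the terminology ``alternating path.'' A definition carries no truth value to be established — it is a stipulation, correct by fiat once it is unambiguous — so there is, strictly speaking, nothing to prove. My plan is therefore not to derive anything but to confirm that the stipulation is well-posed and internally consistent, which is the only ``verification'' a definition admits. I will deliberately \emph{not} pivot to any companion result (such as a characterization of maximum matchings via augmenting paths); that would be answering a different question than the one the text actually poses.

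The one thing worth checking is that the ingredients of the definition cohere. It fixes that the path (i) starts at an unmatched vertex, and (ii) traverses unmatched and matched edges alternately. I would observe that (i) already forces the first traversed edge to be unmatched: if the starting vertex is unmatched, no matched edge is incident to it, so the alternation must begin with an unmatched edge and then continue ``unmatched, matched, unmatched, $\dots$.'' Thus clauses (i) and (ii) are automatically compatible and impose no hidden contradiction, and the phrase ``alternatively'' is unambiguous given the prescribed starting point. For completeness I would also note that the notion is nonvacuous whenever $G$ has an unmatched vertex, since the trivial length-zero path anchored at that vertex vacuously satisfies the condition.

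If pressed to be pedantic, the single point I would flag — again a clarification of the wording, not a theorem to prove — is that no terminal condition is imposed: the path may have any length and may end at either a matched or an unmatched vertex. This looseness is intentional, because the strengthening to an augmenting path (one whose \emph{other} endpoint is also unmatched) is the variant the KM algorithm actually exploits, and it is introduced separately where it is needed. The ``hard part,'' such as it is, is purely editorial rather than mathematical: ensuring the term is applied consistently in the KM and KM-M procedures that follow. I would therefore leave the definition exactly as stated and simply confirm that later usage matches it, since no proof is warranted where none is required.
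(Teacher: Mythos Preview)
Your assessment is correct: the \texttt{lem} environment in this paper is declared as ``Definition,'' so the statement is purely stipulative and the paper offers no proof, exactly as you anticipated. Your meta-reasoning that there is nothing to prove matches the paper's treatment precisely.
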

	
\begin{lem}
	Augmenting Path. Given a bipartite graph $G = (U,V,E)$ with some edges matched. An augmenting path on $G$ is a special alternating path on $G$ whose starting and ending vertices are both unmatched.
\end{lem}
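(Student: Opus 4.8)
The plan is to recognize that this statement is a \emph{definition} rather than a proposition: it stipulates the meaning of the term ``augmenting path,'' so there is no logical claim to establish and hence nothing to prove in the usual sense. Indeed, the enclosing environment is declared as a \emph{Definition}, which confirms its purely definitional status. Accordingly, I would not attempt any derivation; the appropriate response is simply to record that the item introduces notation, and at most to verify that the wording is well-posed and coherent with the immediately preceding definition of an alternating path.

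Concretely, the only checks worth carrying out are the structural consistency facts that follow at once from the phrasing. First, I would confirm that every augmenting path is genuinely a legitimate alternating path: the definition explicitly designates it as a ``special alternating path,'' so the containment is built in, and the only additional requirement is that the \emph{terminal} vertex (not merely the initial one) be unmatched. Second, I would note that since the walk begins at an unmatched vertex and its first traversed edge is unmatched, the edges must read unmatched, matched, unmatched, matched, $\dots$; for the path also to end at an unmatched vertex, its last edge is forced to be unmatched as well. This pins down the parity of the path (an odd number of edges, with exactly one more unmatched edge than matched edge) and, in the bipartite setting $G = (U,V,E)$, places the two endpoints on opposite sides, one in $U$ and one in $V$.

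The main obstacle here is therefore expository rather than mathematical: there is no theorem to attack, and any genuine content — in particular, the role augmenting paths play in certifying that a matching is maximum, or in improving a matching by swapping matched and unmatched edges along the path — belongs to a subsequent result (a Berge-type characterization, or the correctness argument for KM-M) and not to this definition. I would flag that the substantive reasoning is deferred to the point where augmenting paths are actually \emph{used}, and keep the present item strictly definitional, adding only the parity and endpoint remarks above if they are needed later.
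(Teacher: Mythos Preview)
Your assessment is correct and matches the paper exactly: the environment \texttt{lem} is declared in the preamble as a \emph{Definition}, and the paper provides no proof for this item because it is purely definitional. Your additional remarks on parity and endpoint placement go beyond what the paper states, but they are harmless elaborations; the essential point—that there is nothing to prove—is the same.
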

	
\begin{lem}
	Feasible Label. A label of a bipartite graph $G = (U,V,E)$ can be represented by assigning a value $ex_i$ on each vertex $i \in U \cup V$. A feasible label is a label that satisfies $ex_u+ex_v \ge c_{uv}$ for each edge $e_{uv} \in E$.
\end{lem}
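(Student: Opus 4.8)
The first thing to settle is the logical status of this statement. In this paper the environment used for the block is declared so that it prints the heading ``Definition'' rather than ``Lemma'' or ``Theorem''; accordingly, the ``Feasible Label'' block is a \emph{definition}, on a par with the preceding ``Alternating Path'' and ``Augmenting Path'' blocks. It merely fixes terminology: a label assigns a value $ex_i$ to each vertex $i \in U \cup V$, and such a label is called \emph{feasible} when $ex_u + ex_v \ge c_{uv}$ holds for every edge $e_{uv} \in E$. There is therefore no proposition to establish and nothing to prove in the usual sense; the statement is true by stipulation, so any attempt to ``prove'' it would be vacuous.

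If one nevertheless wants a small verifiable fact to attach to this definition — the kind of sanity check that the KM algorithm implicitly relies on — the natural candidate is \emph{existence}: a feasible label always exists, so the definition is non-vacuous and the algorithm has a valid starting point. The plan to show this is the standard construction. First I would set $ex_v = 0$ for every $v \in V$ and $ex_u = \max_{v' \in V} c_{uv'}$ for every $u \in U$; since the weights are nonnegative, $c_{uv} \in \mathbb{Q}^{+}_{0}$, the maximum is well defined and all labels are nonnegative. Then for any edge $e_{uv} \in E$ one has $ex_u + ex_v = \max_{v' \in V} c_{uv'} \ge c_{uv}$, which is exactly the feasibility inequality, so the constructed label is feasible.

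There is essentially no obstacle here, precisely because the object is a definition rather than a theorem. The only mild point worth flagging is notational: in the construction the index $v'$ ranging over $V$ inside the maximum must be kept distinct from the fixed endpoint $v$ of the edge being checked, which is why I write $v'$ rather than reusing $v$. Beyond that bookkeeping, the feasibility check is a one-line inequality, and the genuine mathematical content — that a feasible label together with the equality subgraph certifies optimality of a perfect matching — belongs to the results that come \emph{after} this definition, not to the definition itself.
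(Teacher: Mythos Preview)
Your reading is correct: the paper declares this environment to print as ``Definition,'' offers no proof, and simply uses the notion as terminology for the KM algorithm, so there is nothing to prove. Your optional existence check via $ex_u=\max_{v'\in V}c_{uv'}$ and $ex_v=0$ is exactly the initialization the paper itself uses in Algorithm~1, so your treatment matches the paper's.
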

	
\begin{lem}
	Equivalence Subgraph. An equivalence subgraph is a spanning subgraph of the original graph (a spanning subgraph contains all nodes of the original graph, but not all edges) that only contains edges satisfying $ex_u+ex_v=c_{uv}$.
\end{lem}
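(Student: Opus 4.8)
The plan is to recognise first that the final statement is a \emph{definition}, not a proposition: it introduces the term \emph{equivalence subgraph} and asserts nothing that could be derived. There is therefore no theorem to prove in the usual sense. The only proof-style obligation a definition can carry is \emph{well-definedness} — that, relative to the data it depends on, the named object both exists and is uniquely determined — so I would present that short check in place of a conventional proof.

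The data the definition rests on is a fixed feasible label $ex$ on $G = (U,V,E)$ from the preceding definition, assigning a value $ex_i$ to each vertex $i \in U \cup V$ with $ex_u + ex_v \ge c_{uv}$ for every $e_{uv} \in E$. First I would observe that, once $ex$ is fixed, the predicate $ex_u + ex_v = c_{uv}$ partitions $E$ with no freedom of choice into the \emph{tight} edges, where equality holds, and the slack edges, where the inequality is strict. Next I would note that a spanning subgraph of $G$ is completely determined by its edge set, since the parenthetical clause in the definition forces it to contain every vertex of $U \cup V$. Putting these together, the equivalence subgraph is the unique spanning subgraph whose edge set is exactly the collection of tight edges, which settles both existence and uniqueness at once.

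I do not expect a genuine obstacle, precisely because the object is produced by an explicit construction rather than posited by a claim: the construction yields one and only one subgraph. The two points I would be careful about are expository rather than logical. The definition is meaningful only \emph{after} a feasible label has been chosen, so it is implicitly parameterised by $ex$; because $ex$ is updated during the KM iterations, the equivalence subgraph changes with it, and the surrounding text should make this dependence explicit wherever the subgraph is invoked. Finally, the substantive fact that motivates the definition — that a perfect matching contained entirely in the equivalence subgraph is a minimum-weight maximum matching — is a separate assertion about the KM algorithm and belongs to its correctness analysis; it is not part of this definition and should not be proved here.
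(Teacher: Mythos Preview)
Your reading is correct: in the paper this statement is literally a \emph{Definition} (the \texttt{lem} environment is declared with \verb|\newtheorem{lem}{Definition}|), and the paper offers no proof whatsoever---it simply states the definition and moves on. Your well-definedness check is a reasonable gloss but goes beyond what the paper does; the paper treats it purely as terminology.
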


The procedure of the KM algorithm is shown in Algorithm \ref{alg:km}. The KM algorithm first initializes the value of each vertex. The KM algorithm sets $ex_{u} = \max_{v\in V, e_{uv}\in E}\{c_{uv}\}$ for each vertex $u \in U$, and sets $ex_v = 0$ for each vertex $v \in V$. Then KM traverses all the vertices in $U$ and tries to match each vertex $u$ by finding an augmenting path in the equivalence subgraph starting from $u$.


\begin{breakablealgorithm}
	\caption{The KM algorithm}
	\label{alg:km}	
	\begin{algorithmic}[1]
		\Input $G(U,V,E)$
		\Output $\Pi$
		\Function{KM}{$G(U,V,E)$}
		\State $ex_{v \in V} \gets 0,ex_{u\in U} \gets \max_{v\in V, e_{uv}\in E}\{c_{uv}\}$, $\Pi \gets \emptyset$
		\For{$u \in U$}
		\State $\Pi \gets$ MATCH($u$, $\Pi$)
		\EndFor
		\State \Return $\Pi$
		\EndFunction
	\end{algorithmic}
\end{breakablealgorithm} 

The algorithm for matching a vertex $u \in U$ is shown in Algorithm \ref{alg:match}. The matching algorithm uses the \textit{findpath}() function (Algorithm \ref{alg:findpath}) to find an augmenting path in the equivalence subgraph from the input vertex $u$ (line 5). Once an augmenting path is found, each vertex belonging to $U$ in the augmenting path will be matched with a vertex belonging to $V$ in the augmenting path. Specifically, for an augmenting path $\{u_1,v_1,u_2,v_2,...,u_k,v_k\}$, the vertex $u_i$ is matched with the vertex $v_i$ ($i \in \{1,...,k\}$).

If the \textit{findpath}() function cannot find an augmenting path in the equivalence subgraph, then the matching algorithm tries to adjust the equivalence subgraph by adjusting the feasible label of the input bipartite graph (lines 8-18). Let $vis_x = 1$ indicates vertex $x$ is on the alternating path generated by the last search, and $vis_x = 0$ otherwise. Let $\Delta=\min_{u \in U, v \in V}\{vis_u vis_v(ex_u+ex_v-c_{uv})\}$ denotes the smallest adjustment value that makes it possible for the algorithm to find an augmenting path in the equivalence subgraph. Then the label is adjusted as follows: for each vertex $u \in U \wedge vis_u = 1$, $ex_u := ex_u-\Delta$; for each vertex $v \in V \wedge vis_v = 1$, $ex_v := ex_v+\Delta$.

Moreover, the KM algorithm designs a relaxation function $slack$ to optimize the efficiency of calculating $\Delta$ (line 8 in Algorithm \ref{alg:match}). For each vertex $v \in V$, $slack_v = \min_{u \in U, e_{uv} \in E}\{ex_u+ex_v-c_{uv}\}$ denotes the smallest value that increasing $ex_v$ by this value enables at least one edge connected with $v$ to be appeared in the equivalence subgraph. The relaxation function $slack$ is adjusted adaptively in the matching algorithm and the \textit{findpath} function. By applying the $slack$ function, the time complexity of calculating $\Delta$ can be reduced from $O(n^2)$ to $O(n)$, and the time complexity of the entire KM algorithm can be reduced from $O(n^4)$ to $O(n^3)$.




\begin{breakablealgorithm}
	\caption{The matching algorithm}
	\label{alg:match}
	\begin{algorithmic}[1]
		\Input $u\in U$, $\Pi$
		\Output $\Pi$
		\Function{match}{$u$, $\Pi$}
		\State $slack_{v\in V} \gets\infty$
		\While{$u$ is not matched}
		\State $vis_{u\in U,v\in V}\gets\text{false}$
		\If{FINDPATH($u$)}
		\State break
		\EndIf
		\State $\Delta\gets\min_{v\in V,vis_v=false}\{slack_v\}$
		\For{$u\in U, vis_u=true$}
		\State $ex_u\gets ex_u-\Delta$
		\EndFor
		\For{$v\in V$}
		\If{$vis_v = true$}
		\State $ex_v \gets ex_v + \Delta$
		\Else
		\State $slack_v \gets slack_v - \Delta$
		\EndIf
		\EndFor
		\EndWhile
		\State \Return $\Pi$
		\EndFunction
	\end{algorithmic}
\end{breakablealgorithm}



\begin{breakablealgorithm}
	\caption{Finding an augmenting path}
	\label{alg:findpath}
	\begin{algorithmic}[1]
		\Input $u \in U$
		\Output Whether an augmenting path is found or not
		\Function{findpath}{$u$}
		\State $vis_u\gets\text{true}$ 
		\For {$v \in V\text{ and }vis_v=\text{false}$} 
		\If{$ex_u+ex_v=c_{uv}$}
		\State $vis_v\gets\text{true}$, $u' \gets u \in U \cap e_{uv} \in \Pi$ 
		\If{$v \text{ is not matched} \vee  \text{findpath}(u')$}
		\State $\Pi \gets \Pi \backslash \{e_{u'v}\}$, $\Pi \gets \Pi \cup \{e_{uv}\}$
		\State \Return true
		\EndIf
		\Else
		\State $slack_v\gets\min\{slack_v,ex_u+ex_v-c_{uv}\}$
		\EndIf
		\EndFor
		\State \Return false
		\EndFunction
	\end{algorithmic}	
\end{breakablealgorithm}


\subsection{The Simplified KM-M Matching Algorithm}

The MP$_{\text{LS}}$ algorithm performs the KM algorithm on the entire input bipartite graph in the first stage of each iteration, which is redundant. The proposed MP$_{\text{KM-M}}$ algorithm presents a simplified matching algorithm, denoted as KM-M, to calculate the minimum weight maximum matching in the first stage.
The procedure of KM-M is shown in Algorithm \ref{alg:km-m}. KM-M receives a bipartite graph $G$, a maximum matching $\Pi$ and a vertex $u \in U$ as the input, and outputs a maximum matching $\Pi$. We then explain how the KM-M algorithm works in the proposed MP$_{\text{KM-M}}$ algorithm by introducing the process of MP$_{\text{KM-M}}$ as follows.

MP$_{\text{KM-M}}$ is also an iterative heuristic algorithm and each iteration also consists of three stages like MP$_{\text{LS}}$ does. In the first stage of the first iteration of MP$_{\text{KM-M}}$, MP$_{\text{KM-M}}$ sends the original bipartite graph $G$, an empty maximum matching $\Pi = \emptyset$, and an arbitrary vertex $u \in U$ to the KM-M algorithm. Then KM-M performs the KM algorithm on $G$ and outputs a maximum matching $\Pi$ as MP$_{\text{LS}}$ does. 

The second and third stages of MP$_{\text{KM-M}}$ perform the same operations as  MP$_{\text{LS}}$ does, i.e., generates a solution to the PMMWM and improves the solution by a local search in the second stage, and adjusts the weight of an edge to $10^2c_{max}$ in the third stage. Suppose $\Pi$ is the maximum matching calculated by the first and second stages (local search in the second stage may adjust the maximum matching generated by KM-M), $e_{uv}$ ($u \in U, v \in V$) is the edge selected in the third stage, and $G'$ is the bipartite graph where the weight of edge $e_{uv}$ is just adjusted. MP$_{\text{KM-M}}$ will send $G'$, $\Pi$ and $u$ to the KM-M algorithm in the next iteration, and the KM-M algorithm only needs to remove the matching of $u$ in $\Pi$, and rematch $u$ in $G'$, which is significantly more efficient than performing KM on the entire $G'$. 

As a result, the proposed MP$_{\text{KM-M}}$ can reduce the time complexity of the first stage in each iteration except the first iteration of MP$_{\text{LS}}$ from $O(n^3)$ to $O(n^2)$. Moreover, the termination condition of MP$_{\text{KM-M}}$ is the same as that of MP$_{\text{LS}}$, i.e., MP$_{\text{KM-M}}$ stops when the solution to the PMMWM has not improved for 20 iterations.


\begin{breakablealgorithm} 
	\caption{The KM-M algorithm}
	\label{alg:km-m}	
	\begin{algorithmic}[1]
		\Input $G(U,V,E)$, $\Pi$, $u$
		\Output $\Pi$
		\Function{KM-M}{$G(U,V,E)$, $\Pi$, $u$}
		\If{$\Pi \neq \emptyset$}
		\State $e_{uv} \gets$ the edge in $\Pi$ and is connected with $u$ 
		\State $\Pi \gets \Pi \backslash \{e_{uv}\}$, $\Pi \gets$ match($u$, $\Pi$)
		\Else
		\State $ex_{v \in V} \gets 0,ex_{u\in U} \gets \max_{v\in V, e_{uv}\in E}\{c_{uv}\}$, $\Pi \gets \emptyset$
		\EndIf
		\For{$u \in U$}
		\State $\Pi \gets$ match($u$, $\Pi$)
		\EndFor
		\State \Return $\Pi$
		\EndFunction
	\end{algorithmic}
\end{breakablealgorithm}

\rfig{km-m1} illustrates a maximum weight maximum matching $\Pi$ on the graph $G$ calculated by the KM algorithm. The red and black edges satisfy $ex_u+ex_v=c_{uv}$ (i.e., edges in the equivalence subgraph), and the red edges is belonging to $\Pi$. Suppose we modify the weight of the matching edge $e_{u_1 v_1}$ from $-2$ to $-10^2c_{max}$ ($-500$) and yield the new graph $G'$. The KM-M algorithm performs as follows. Firstly, as \rfig{km-m2} shows, KM-M removes the matching of $u_1$ from $\Pi$ and the modified edge $e_{u_1 v_1}$ from the equivalence subgraph. Then KM-M uses the matching algorithm (Algorithm \ref{alg:match}) to find a matching edge for $u_1$. As \rfig{km-m3} shows, the algorithm adjusts the feasible label to add edge $e_{u_1 v_2}$ into the equivalence subgraph. Finally, as shown in \rfig{km-m4}, we find an augmenting path $\{u_1,v_2,u_3,v_1\}$ (the dashed edges) and replace the edge $e_{u_3 v_2}$ with $e_{u_1 v_2},e_{u_3 v_1}$ in the matching $\Pi$. The new matching is as good as the output of the KM algorithm on the entire graph $G'$. Obviously, KM-M reduces a lot of redundant operations for calculating the matching of other edges. 

\begin{figure}[H]
    \centering
    \begin{subfigure}[b]{0.48\textwidth}
        \centering
		\setlength{\abovecaptionskip}{0.cm}
		\includegraphics[scale=0.2,trim={0 0 0 0}, clip]{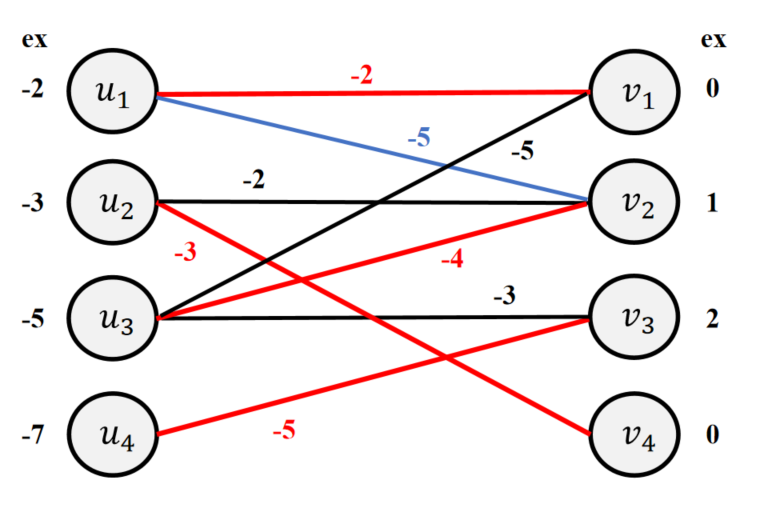}
		\caption{A matching $\Pi$ on $G$}
		\label{fig:km-m1}
    \end{subfigure}
    \hfill
   \begin{subfigure}[b]{0.48\textwidth}
        \centering
		\setlength{\abovecaptionskip}{0.cm}
		\includegraphics[scale=0.2,trim={0 0 0 0}, clip]{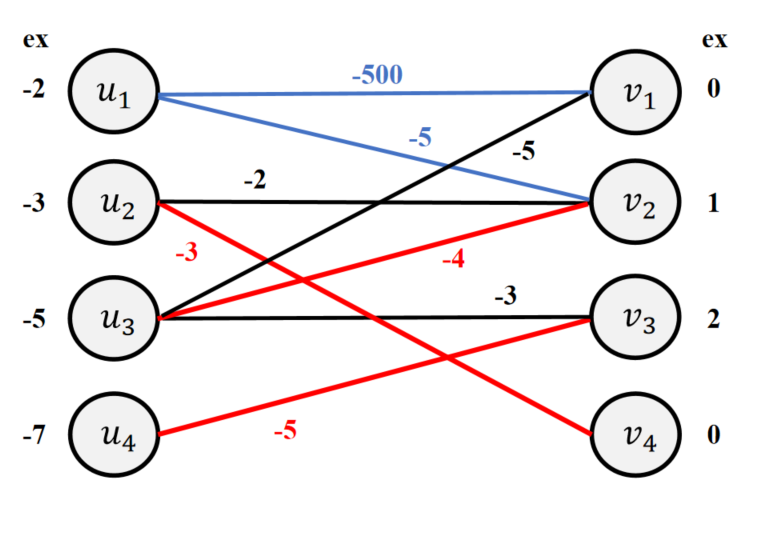}
		\caption{Modify the weight of an edge}
		\label{fig:km-m2}
    \end{subfigure}
    \begin{subfigure}[b]{0.48\textwidth}
        \centering
		\setlength{\abovecaptionskip}{0.cm}
		\includegraphics[scale=0.2,trim={0 0 0 0}, clip]{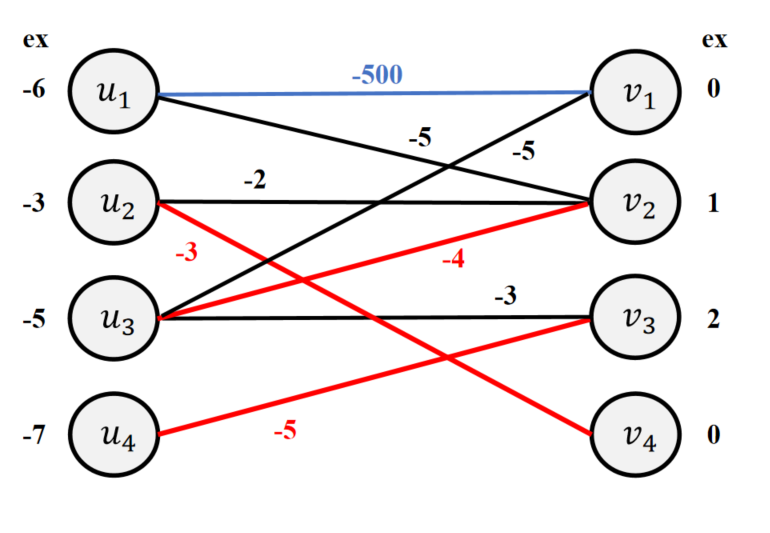}
		\caption{Adjust the feasible label}
		\label{fig:km-m3}
    \end{subfigure}
    \hfill
   \begin{subfigure}[b]{0.48\textwidth}
        \centering
		\setlength{\abovecaptionskip}{0.cm}
		\includegraphics[scale=0.2,trim={0 0 0 0}, clip]{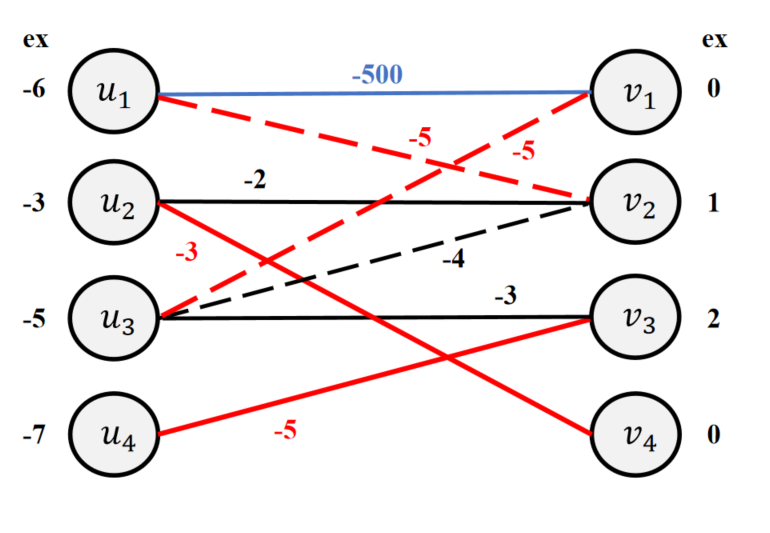}
		\caption{Find an augmenting path}
		\label{fig:km-m4}
    \end{subfigure}
\caption{An illustrate of the KM-M procedure.}
\label{fig:km-m}   
\end{figure}


\subsubsection{The Correctness of KM-M Algorithm}

To show the correctness of KM-M, we need to briefly review the KM algorithm by the primal-dual method~\citep{DBLP:books/daglib/0069809}. The dual of the linear program of the maximum weight maximum matching problem is to minimize the sum of the feasible label.

We first present some conceptions. The primal solution is a matching $\Pi$. The dual solution is an assignment of the feasible label satisfying $ex_u+ex_v \ge c_{uv}$ for each $e_{uv} \in E$. If the primal and dual solutions obey the complementary slackness, they are optimal solutions. The complementary slackness means that each edge $e_{uv} \in \Pi$ satisfies $ex_u+ex_v = c_{uv}$, i.e., $\Pi$ is a perfect matching in the equivalence subgraph of the dual solution.

KM optimizes the dual solution until finding a perfect match in the equivalence subgraph. If a perfect matching is found, this matching is an optimal primal solution (maximum weight maximum matching) and the algorithm ends. Otherwise, there is a modification method~\citep{DBLP:journals/jacm/EdmondsK72} to decrease the sum of the feasible label and keep the dual solution feasible, which also adds new edges to the equivalence subgraph.


Now let's get back to the correctness of KM-M. We first present the theorem, and then the proof.

\begin{thm}
\label{thm:km-m}
Given a bipartite graph $G = (U,V,E)$ and a maximum weight maximum matching $\Pi$ calculated by the KM algorithm. If we modify the weight of exactly one edge $e_{uv} \in \Pi$ from $c_{uv}$ to a relatively large value, say $10^2c_{max}$, we only need to remove $e_{uv}$ from $\Pi$, and rematch $u$ in the new graph, to obtain the maximum weight maximum matching in the new graph.
\end{thm}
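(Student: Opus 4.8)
The plan is to reason entirely through the primal-dual optimality criterion recalled just above: a matching together with a feasible label that jointly obey complementary slackness is simultaneously primal- and dual-optimal, hence a maximum weight maximum matching. It therefore suffices to show that the procedure ``delete the matched edge $e_{uv}$ from $\Pi$, then call \textsc{match}($u$)'' terminates in a matching that again satisfies feasibility and complementary slackness on the modified graph $G'$. No separate correctness argument for \textsc{match} itself is needed if I can exhibit the post-deletion configuration as a legitimate mid-execution state of the KM algorithm.

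First I would track what the weight change does to the existing certificate $(\Pi, ex)$. Recall that KM actually runs on the negated weights, so raising the original weight of $e_{uv}$ to $10^2 c_{\max}$ corresponds to \emph{lowering} its weight in the max-weight instance to a very negative value. The only feasibility inequality that is touched is $ex_u + ex_v \ge c'_{uv}$ for the single modified edge, and since its right-hand side only decreased, the inequality still holds; every other label constraint is untouched. Thus $ex$ remains a feasible label for $G'$. I would then check complementary slackness for the reduced matching $\Pi \setminus \{e_{uv}\}$: each surviving matched edge keeps its old weight and was tight, so it stays tight, while the deleted edge — which now satisfies $ex_u + ex_v = c_{uv} > c'_{uv}$ — is correctly neither tight nor matched. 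Hence $(\Pi \setminus \{e_{uv}\}, ex)$ is a feasible-label / complementary-slackness pair on $G'$ whose matching leaves exactly one vertex of $U$, namely $u$, unmatched; any now-exposed vertex of $V$ is harmless because $n_1 \le n_2$ permits unmatched $V$-vertices.

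The crux is then to recognize this pair as a genuine mid-execution state of the KM algorithm, so that the invariants maintained by \textsc{match} (\ralg{match}) apply verbatim. Indeed, after KM has matched every vertex of $U$ but one, its internal state is precisely of this form: a partial matching, a feasible label, and complementary slackness on all matched edges. Calling \textsc{match}($u$) on such a state preserves both feasibility and complementary slackness through each label adjustment (the updates $ex_u := ex_u - \Delta$ for visited $u$ and $ex_v := ex_v + \Delta$ for visited $v$ keep all constraints valid and turn at least one slack edge tight) and through each augmentation (which only swaps tight edges along the alternating path). Since the standing assumption guarantees a maximum matching of size $n_1$ in $G'$ — the modification alters a weight but deletes no edge — the search must eventually locate an augmenting path and match $u$.

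Putting these together, \textsc{match}($u$) returns a matching $\Pi'$ of size $n_1$ that, together with the final labels, still satisfies feasibility and complementary slackness on $G'$; by the primal-dual criterion $\Pi'$ is a maximum weight maximum matching of $G'$, which is exactly the claim. I expect the main obstacle to be the feasibility step rather than the mechanical reuse of \textsc{match}: the whole shortcut rests on the weight moving in the ``safe'' direction, so that no label constraint is ever violated. Had the edge weight moved the other way, the old labels could become infeasible and a global repair would be unavoidable; I would therefore state the monotonicity of the change explicitly, and also remark on why only $u$ — and not $v$ — must be rematched, namely that feasibility and complementary slackness are already intact everywhere except at the single exposed vertex of $U$.
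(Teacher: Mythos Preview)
Your argument is correct and shares with the paper the decisive observation: because the modified edge weight moves in the ``safe'' direction (it decreases in the negated, max-weight instance), the old labels $ex$ remain dual-feasible on $G'$, so the configuration $(\Pi\setminus\{e_{uv}\},ex)$ is a legitimate state from which \textsc{match}$(u)$ may be invoked. On this central point you and the paper are identical.

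Where you diverge is in justifying why \textsc{match}$(u)$ then finishes correctly. You exhibit $(\Pi\setminus\{e_{uv}\},ex)$ explicitly as a valid mid-execution KM state---feasible labels plus complementary slackness on every surviving matched edge---and then simply inherit the standard KM invariants and termination guarantee. The paper instead argues indirectly: since \textsc{match} only \emph{decreases} the label sum, one must check that the new optimum is no larger than the old one, which it establishes by a short contradiction (if the optimal matching weight increased, the original $\Pi$ could not have been optimal). Your route is cleaner and more structural; the paper's route makes explicit the monotonicity of the optimal value under the weight change, which is a mildly interesting fact in its own right but not strictly needed once the intermediate-state viewpoint is adopted.
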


\begin{proof}

We first prove the feasibility of the dual solution. Because we only modify the weight of edge $e_{uv} \in \Pi$ from $c_{uv}$ to $10^2c_{max}$, and we known $ex_{u}+ex_{v}= -c_{uv}$, thus $ex_{u}+ex_{v} > -c^{\prime}_{u,v} = -10^2c_{max}$ holds. So the dual solution is still feasible for the current matching, indicating that the previous information can be kept. After modifying the weight of $e_{uv} \in \Pi$, only vertex $u \in U$ does not have a matching. Therefore, instead of executing the entire KM algorithm, we only need to execute \ralg{match} on $u$. 


Then, since \ralg{match} only decreases the sum of feasible labels, we need to prove that after modifying the edge weight of $e_{uv}$, the sum of the matching weight doesn't become larger. This is easy to proof by contradiction. Assuming that the sum increases, there are two possible cases. The first one is that the modified edge is in the matching. As the edge weight decreases, the matching weight of the remaining edges increases, indicating that the previous matching is not the maximum weight maximum matching, which is a contradiction. The second one is that the modified edge is not in the matching, which can be proved in the same way. 
So \rthm{km-m} holds, and therefore KM-M is correct.

\end{proof}

\section{Experimental Results and Comparisons}

To analyze the performance of the proposed MP$_{\text{KM-M}}$ algorithm, we do experiments on extensive instances. All the experiments were performed on an Intel® Core™ i7 CPU at 2.7 GHz and 8GB of RAM, running Windows10 64-bit. All the tested algorithms are implemented in C++.

\subsection{Data Generation Strategy}

 We generate four sets of PMMWM instances by the strategy proposed by \cite{a1}, so as to do a comprehensive comparison with the baseline algorithm, MP$_{\text{LS}}$,. Since $n_1\neq n_2$ can be reduced to the case of $n_1 = n_2$ by adding edges with zero weight. In this paper, we define $n: = n_1 = n_2$, and the four instance sets are defined as follows.

The first instance set contains two classes, BPS70 and BPS80. We first randomly generate $n^2$ rational numbers on the interval [1, 1000] and a complete bipartite graph $G(U,V,E)$ with vertex sets $U=\{u_1,...,u_n\}$ and $V=\{v_1,...,v_n\}$. We sort the rational numbers in non-decreasing order 
and store in 
list $L$. Then we repeat the following operations on each vertex $v_i \in V$ from $v_1$ to $v_n$, to assign $n$ rational numbers in $L$ to the $n$ edges incident with $v_i$ in $E$. 
For $v_i \in V$, we assign the first $\lfloor0.8n\rfloor$ (in case of BPS80) or $\lfloor0.7n\rfloor$ (in case of BPS70) elements of $L$ to the edges $(u_1 ,v_i)$, $(u_2,v_i)$, $...$, $(u_{\lfloor0.8n\rfloor},v_i)$ (BPS80) or $(u_1,v_i)$, $(u_2,v_i)$, $...$, $(u_{\lfloor0.7 n\rfloor},v_i)$ (BPS70) and remove these elements from $L$. 
For the remaining edges incident with $v_i$, we randomly assign the remaining elements in $L$ in turn, and remove the assigned elements from $L$.

The second instance set is RAND. We first generate a complete bipartite graph $G(U,V,E)$ with vertex sets $U=\{u_1,...,u_n\}$ and $V=\{v_1,...,v_n\}$, then enumerate each edge $e_{uv} \in E$ in turn and assign a random integer in [1, 1000] as its weight.

The third instance set contains two classes, SPARSE70 and SPARSE80. Instances belonging to this set is defined on non-complete bipartite graphs with $|E|=\lceil0.7n^2\rceil$ for SPARSE70 or $|E|=\lceil0.8n^2\rceil$ for SPARSE80. To generate an instance of this set, we first randomly generate $\lceil0.7n^2\rceil$ (in case of SPARSE70) or $\lceil0.8n^2\rceil$ (in case of SPARSE80) rational number on the interval [1, 1000], and store in list $L$. Then $e_{u_i v_i},\{i=1,...,n\}$ is added to $E$ to ensure that there must be a feasible solution for the instance. Each edge $e_{u_i v_i}$ is assigned a random element in $L$. Once an element in $L$ is assigned to an edge, the element will be removed from $L$. For each remaining element in $L$, we randomly add an edge $e_{uv} \not\in E$ into $E$ and assign the element as the weight of $e_{uv}$. In the end, we permute the order of $v_i \in V$ to obtain randomness.

Since the density of each instance in the third set is still high, the same construction method of generating the instances in the third set is used to generate the fourth instance set, which contains two classes, SPARSE30 and SPARSE20. Each instance in the fourth set is defined on a sparse bipartite graph with $|E|=\lceil0.3n^2\rceil$ for SPARSE30 or $|E|=\lceil0.2n^2\rceil$ for SPARSE20. 

For each instance set, we generate the instances with different parameter settings. For generating small instances, we vary the size of the bipartite graphs from 100 to 190, i.e., $n=100,110,120, ...,190$. For generating large instances, we vary the size of the bipartite graphs from 1000 to 1400, i.e., $n = 1000,1100,1200,...,1400$. For the number of partitions, $m$, we set it to 2, $\lfloor0.04n\rfloor$, $\lfloor0.08n\rfloor$ or $\lfloor0.125n\rfloor$. We set the maximum number of vertices in each partition, $\bar{u}$ to $\lceil\dfrac{n}{m}\rceil$, $\lfloor\lceil\dfrac{n}{m}\rceil+\dfrac{1}{3}(n-\lceil\dfrac{n}{m}\rceil)\rfloor$ or $n$. 

For each combination of parameters and
each generation strategy, we construct 20 instances if $n < 1000$, otherwise we construct 5 instances. In the end, we have a total of 18,900 instances, containing 16,800 small instances and 2,100 large instances.

\subsection{Comparison of MP$_{\text{KM-M}}$ and MP$_{\text{LS}}$}

We test the proposed MP$_{\text{KM-M}}$ algorithm and MP$_{\text{LS}}$ on all the 18,900 instances. Each instance is solved once by each algorithm. Since MP$_{\text{KM-M}}$ only optimizes the efficiency of the matching process, and will not change the matching result of the exact KM algorithm, the results of MP$_{\text{KM-M}}$ and MP$_{\text{LS}}$ are the same for each of the 18,900 instances. The consistent running results are as expected, which demonstrate the correctness of the proposed KM-M algorithm. This subsection presents the comparison results on the runtime of MP$_{\text{KM-M}}$ and MP$_{\text{LS}}$ on small and large instances respectively.


\subsubsection{Comparison on Small Instances}

We first compare the runtime of MP$_{\text{KM-M}}$ and MP$_{\text{LS}}$ on all the 16,800 small instances. \rfig{st} shows the runtime of each algorithm per instance varying with the problem scale $n$ in the four instance sets respectively. For each $n$, the results are expressed by the average runtime over all parameter settings and all generation strategies. Moreover, \rfig{st5} shows the ratio of the average runtime of MP$_{\text{LS}}$ to MP$_{\text{KM-M}}$ varying with the problem scale $n$.

From the results of \rfig{st} and \rfig{st5} we can see that:

(1) For each problem scale $n$, the proposed MP$_{\text{KM-M}}$ algorithm significantly outperforms MP$_{\text{LS}}$ on the efficiency. Specifically, the runtime of MP$_{\text{LS}}$ is 2 to 12 times longer than that of MP$_{\text{KM-M}}$.

(2) With the increment of $n$, the average runtime of MP$_{\text{LS}}$ increases much quicker than that of MP$_{\text{KM-M}}$. This is because the matching process takes up a large proportion in the entire algorithm for solving small instances, and the runtime of the KM algorithm increases sharply with the increment of $n$. Thus, the improvement of MP$_{\text{KM-M}}$ over MP$_{\text{LS}}$ is more significant for larger $n$.

(3) On the first set of small instances, the runtime of MP$_{\text{LS}}$ is at least 6 times more than that of MP$_{\text{KM-M}}$. The improvement is more significant than that on the other three sets. This is because the runtime of KM is longer for calculating the maximum matching of complete and constructed bipartite graphs.

\begin{figure}[H]
    \centering
    \begin{subfigure}[b]{0.48\textwidth}
        \centering
		\setlength{\abovecaptionskip}{0.cm}
		\includegraphics[scale=0.4,trim={12 5 20 35}, clip]{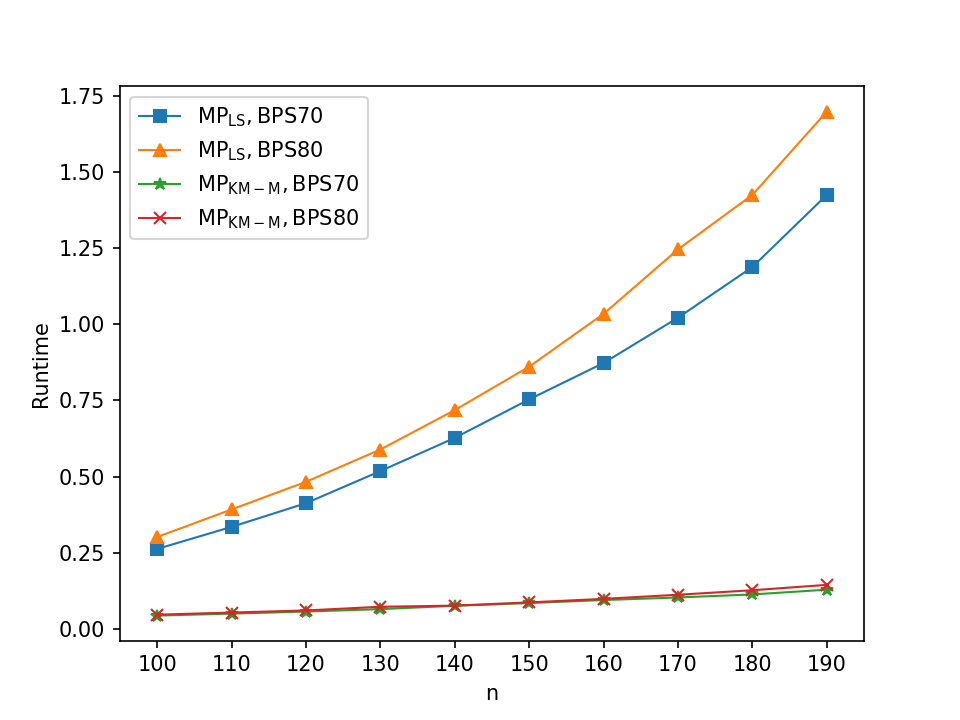}
		\caption{Comparison on the first set}
		\label{fig:st1}
    \end{subfigure}
    \hfill
   \begin{subfigure}[b]{0.48\textwidth}
        \centering
		\setlength{\abovecaptionskip}{0.cm}
		\includegraphics[scale=0.4,trim={12 5 20 35}, clip]{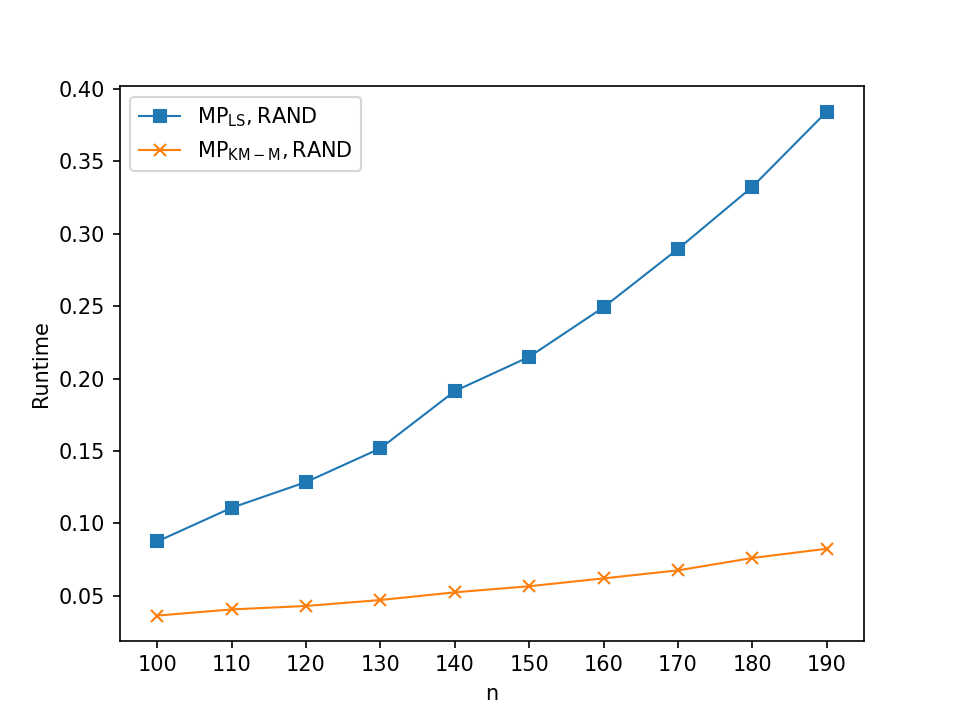}
		\caption{Comparison on the second set}
		\label{fig:st2}
    \end{subfigure}
    \begin{subfigure}[b]{0.48\textwidth}
        \centering
		\setlength{\abovecaptionskip}{0.cm}
		\includegraphics[scale=0.4,trim={12 5 20 35}, clip]{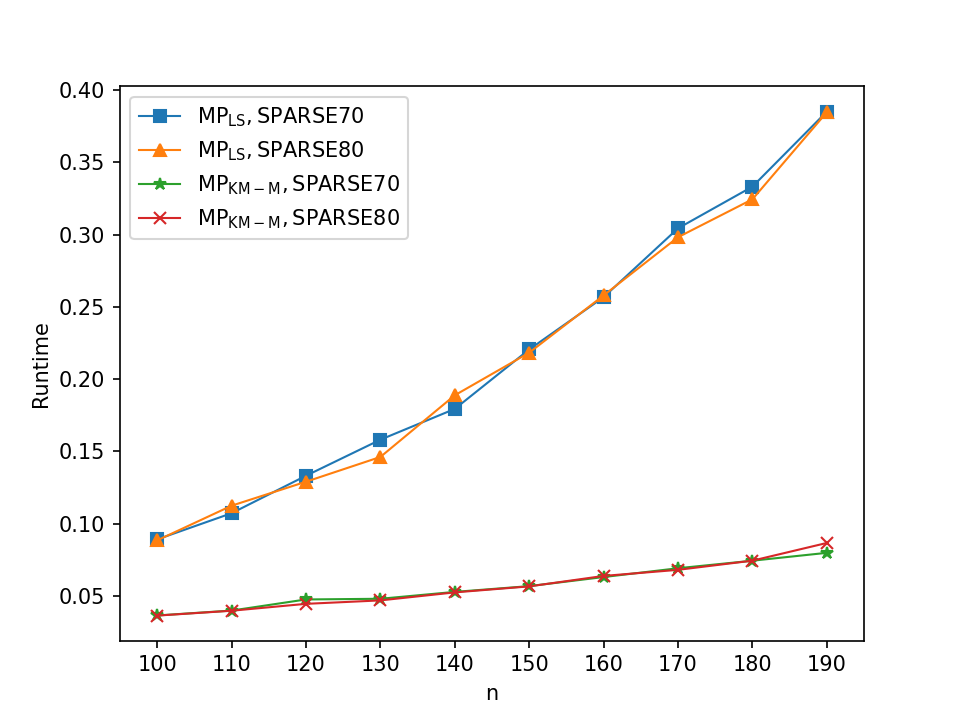}
		\caption{Comparison on the third set}
		\label{fig:st3}
    \end{subfigure}
    \hfill
   \begin{subfigure}[b]{0.48\textwidth}
        \centering
		\setlength{\abovecaptionskip}{0.cm}
		\includegraphics[scale=0.4,trim={12 5 20 35}, clip]{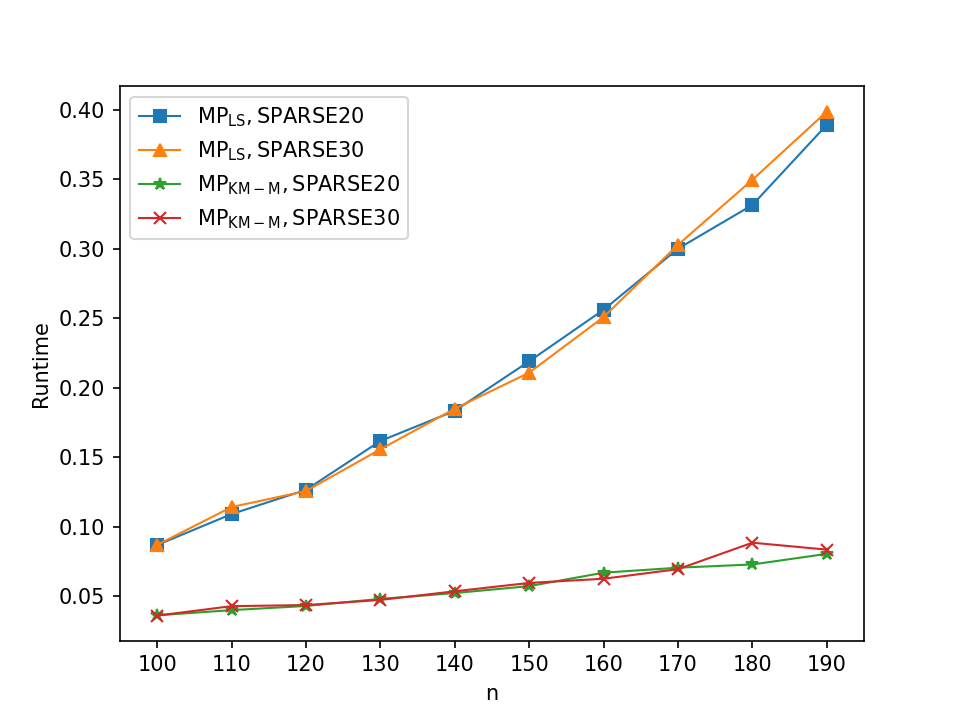}
		\caption{Comparison on the fourth set}
		\label{fig:st4}
    \end{subfigure}
\caption{Comparison on the runtime of MP$_{\text{KM-M}}$ and MP$_{\text{LS}}$ on the small instances.}
\label{fig:st}   
\end{figure}


\begin{figure}[ht]
	\centering
	\setlength{\abovecaptionskip}{0.cm}
	\includegraphics[scale=0.5,trim={20 10 20 40}, clip]{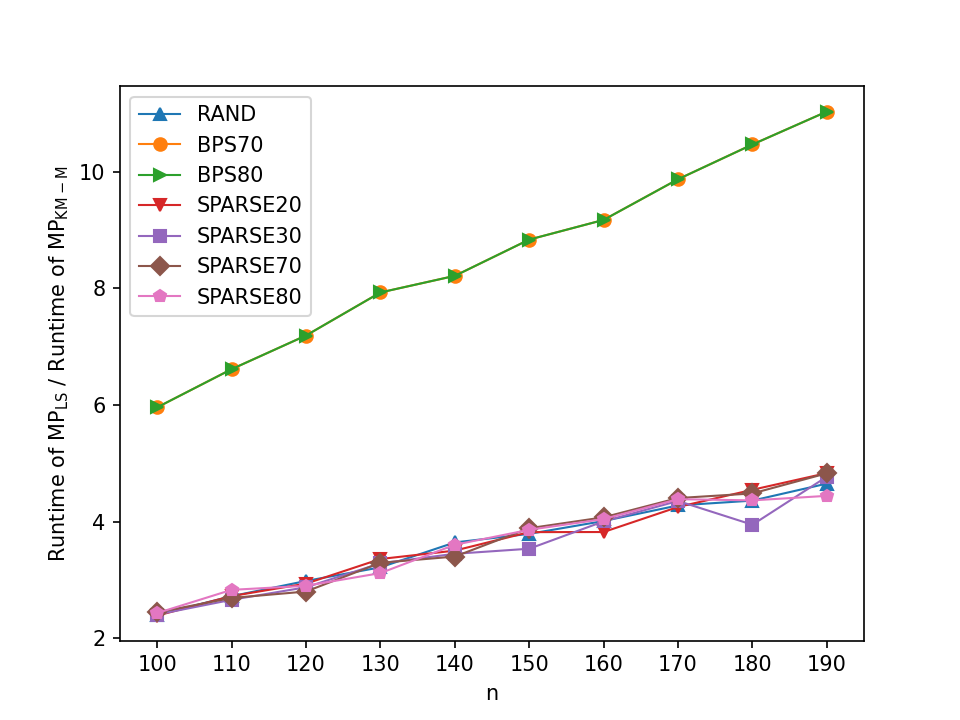}
	\caption{Ratio of the average runtime of MP$_{\text{LS}}$ over MP$_{\text{KM-M}}$ on small instances.}
	\label{fig:st5}
\end{figure}

\subsubsection{Comparison on Large Instances}

We then do comparison on all the 2,100 large instances. \rfig{lt} show the results of average runtime of each algorithm varying with the problem scale $n$ in the four instance sets, respectively. And \rfig{lt} shows the ratio of the average runtime of MP$_{\text{LS}}$ to MP$_{\text{KM-M}}$ varying with the problem scale $n$.

From the results of \rfig{lt} and \rfig{lt5} we can see that:

(1) MP$_{\text{KM-M}}$ also significantly outperforms MP$_{\text{LS}}$ on the efficiency when solving large instances. Specifically, the runtime of MP$_{\text{LS}}$ is 8 to 19 times longer than that of MP$_{\text{KM-M}}$.

(2) The ratio of the runtime of MP$_{\text{LS}}$ over MP$_{\text{KM-M}}$ is stable for each $n \in \{1000,1100,1200,1300,1400\}$. This is because for the large instances, although the runtime of the partition phase is less than that of matching phase, it can not be ignored, and our algorithm only improves the matching phase of MP$_{\text{LS}}$. Thus the improvement of MP$_{\text{KM-M}}$ over MP$_{\text{LS}}$ does not increase with the increment of $n$ on large instances.

(3) The improvement of MP$_{\text{KM-M}}$ over MP$_{\text{LS}}$ is most  significant on the first set of large instances, of which the runtime is reduced by at least 19 times.
The second best improvement is on the RAND instance set, of which the runtime is reduced by at least 12 times. This is because when solving the large instances, the runtime of KM on complete bipartite graphs is longer than that on sparse bipartite graphs, and the runtime of KM on constructed bipartite graphs is longer than that on random produced bipartite graphs.

\begin{figure}[H]
    \centering
    \begin{subfigure}[b]{0.48\textwidth}
        \centering
		\setlength{\abovecaptionskip}{0.cm}
		\includegraphics[scale=0.4,trim={12 5 20 35}, clip]{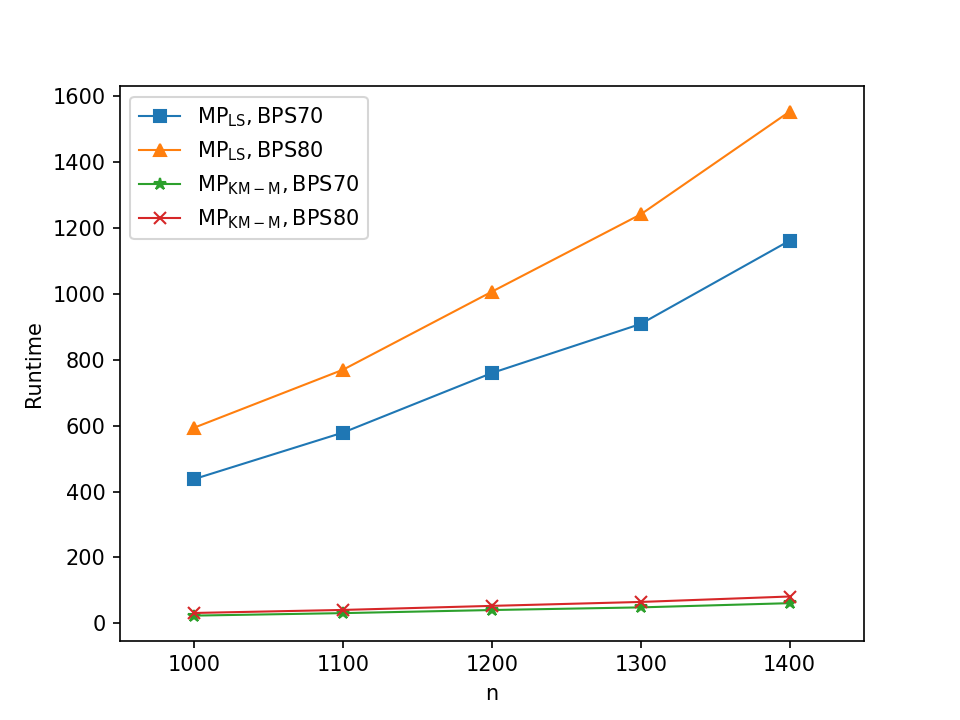}
		\caption{Comparison on the first set}
		\label{fig:lt1}
    \end{subfigure}
    \hfill
   \begin{subfigure}[b]{0.48\textwidth}
        \centering
		\setlength{\abovecaptionskip}{0.cm}
		\includegraphics[scale=0.4,trim={12 5 20 35}, clip]{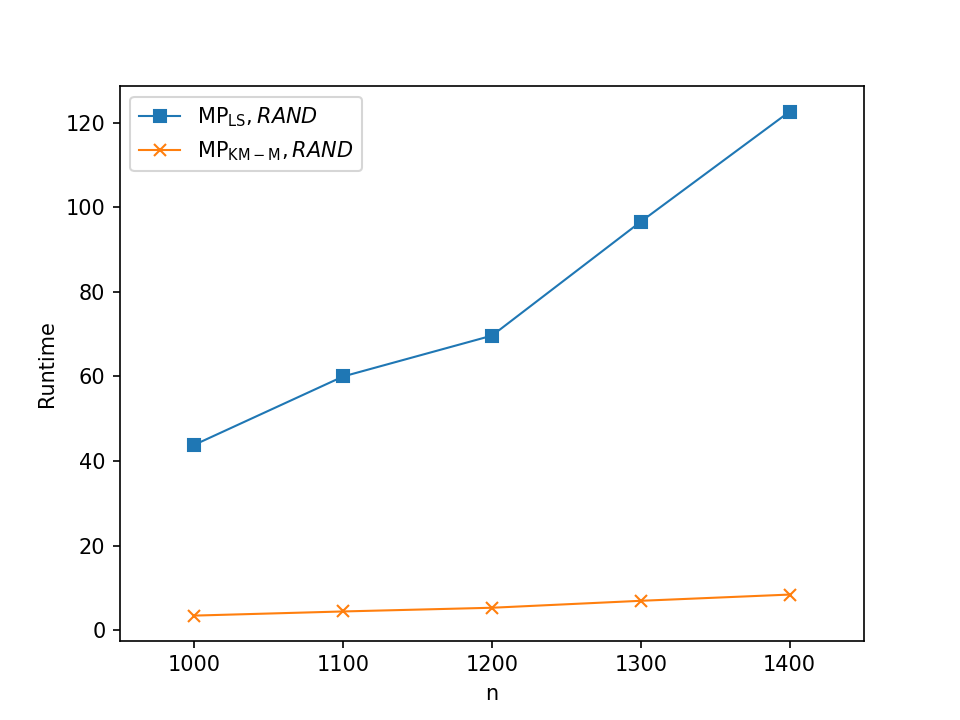}
		\caption{Comparison on the second set}
		\label{fig:lt2}
    \end{subfigure}
    \begin{subfigure}[b]{0.48\textwidth}
        \centering
		\setlength{\abovecaptionskip}{0.cm}
		\includegraphics[scale=0.4,trim={12 5 20 35}, clip]{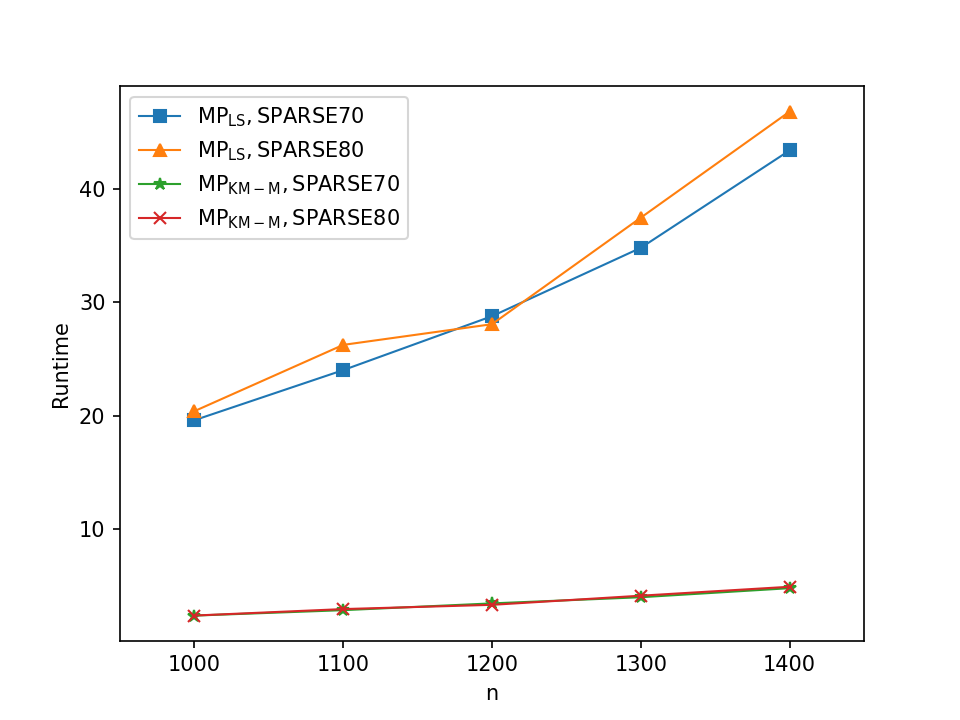}
		\caption{Comparison on the third set}
		\label{fig:lt3}
    \end{subfigure}
    \hfill
   \begin{subfigure}[b]{0.48\textwidth}
        \centering
		\setlength{\abovecaptionskip}{0.cm}
		\includegraphics[scale=0.4,trim={12 5 20 35}, clip]{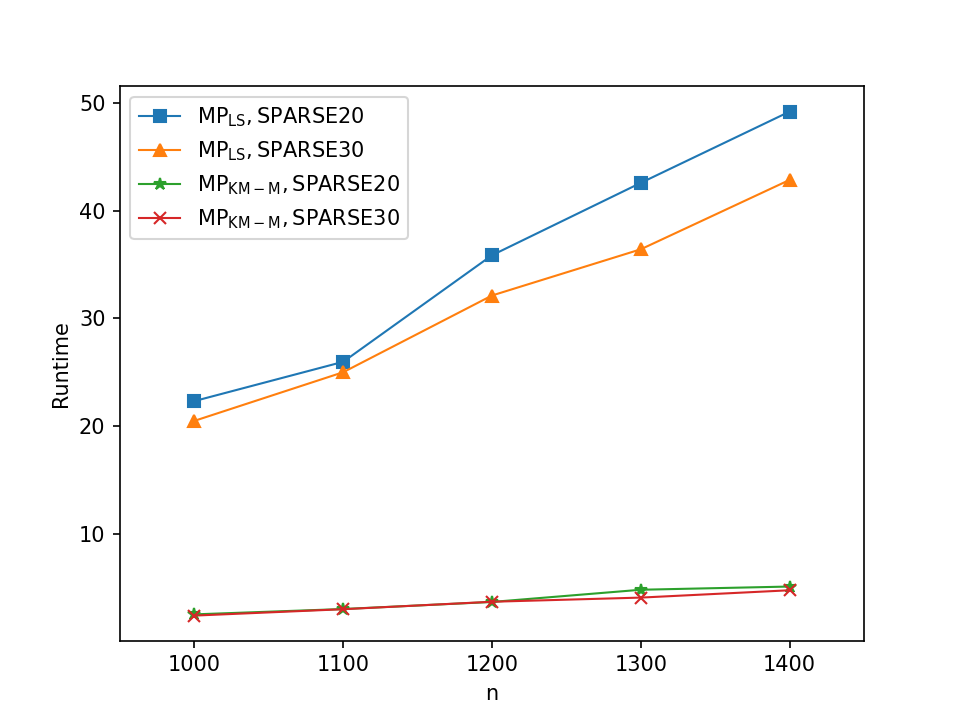}
		\caption{Comparison on the fourth set}
		\label{fig:lt4}
    \end{subfigure}
\caption{Comparison on the runtime of MP$_{\text{KM-M}}$ and MP$_{\text{LS}}$ on the large instances.}
\label{fig:lt}   
\end{figure}

\begin{figure}[ht]
	\centering
	\setlength{\abovecaptionskip}{0.cm}
	\includegraphics[scale=0.5,trim={20 10 20 40}, clip]{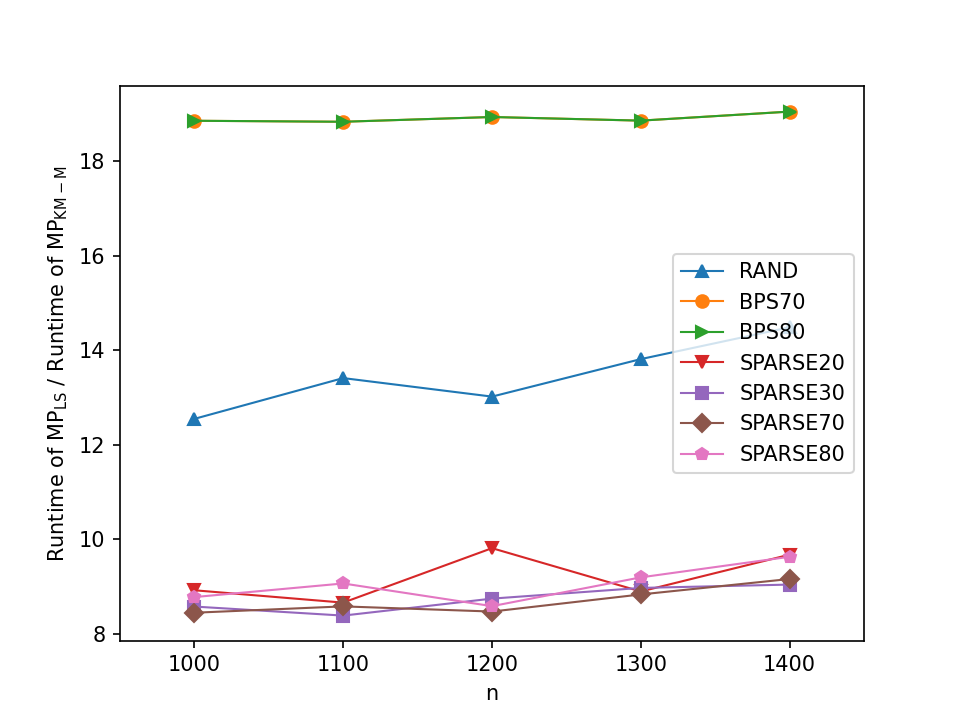}
	\caption{Ratio of the average runtime of MP$_{\text{LS}}$ over MP$_{\text{KM-M}}$ on large instances.}
	\label{fig:lt5}
\end{figure}

\section{Conclusion}
This paper proposes an iterative heuristic algorithm, denoted as MP$_{\text{KM-M}}$, for solving the Partitioning Min-Max Weighted Matching (PMMWM) problem. The proposed algorithm significantly improves the state-of-the-art PMMWM heuristic algorithm, MP$_{\text{LS}}$, on the algorithm efficiency. 
In order to escape from the local optima, at each iteration the MP$_{\text{LS}}$ algorithm adjusts the weight of an edge and generates a different initial maximum matching by the KM algorithm. However, applying KM to calculate the whole graph at each iteration is redundant and inefficiency. To address this issue, we propose a KM-M algorithm that only calculates the matching of the vertex connected with the specific edge whose weight is changed in the previous iteration, and replace the KM algorithm in MP$_{\text{LS}}$. The proposed MP$_{\text{KM-M}}$ algorithm can reduce the time complexity of the patching phase of MP$_{\text{LS}}$ from $O(n^3)$ to $O(n^2)$ without reducing the solution quality. We also proof the correctness of MP$_{\text{KM-M}}$ by the primal dual method. 
We generate extensive and diverse PMMWM instances to evaluate the performance of the proposed algorithm. The experimental results on all the 18,900 instances demonstrate the significant improvement of MP$_{\text{KM-M}}$ over MP$_{\text{LS}}$.

Our strategy is useful for other algorithms that do incremental updates during the iterations. In future work, we will try other related problems and their algorithms to try to improve the algorithm efficiency. 

\clearpage

\nocite{*}

\bibliography{00main}

\end{document}